\newtheorem{theorem}{Theorem}
\newtheorem{definition}[theorem]{Definition}
\newtheorem{example}[theorem]{Example}
\newtheorem{proposition}[theorem]{Proposition}
\newenvironment{proof}[1][Proof]{\noindent\textbf{#1.} }{\ \rule{0.5em}{0.5em}}
\newdimen\dummy
\title{A general quantum information model
 for the contextual dependent systems breaking the classical probability law}
\author{Masanari Asano\thanks{%
Department of Information Sciences, Tokyo University of Science, Yamasaki
2641, Noda-shi, Chiba, 278-8510 Japan}, Irina Basieva\thanks{%
Institute of Information Security, Russian State University for Humanities,
Moscow, Russia}, Andrei Khrennikov\thanks{%
International Center for Mathematical Modeling in Physics and Cognitive
Sciences, Linnaeus University, S-35195, V\"{a}xj\"{o}, Sweden}, Masanori Ohya%
$^{\ast }$ and Ichiro Yamato$^{\ast }$}
\begin{document}
\maketitle

\begin{abstract}
There exist several phenomena (systems) breaking the classical
probability laws. Such systems are contextual dependent adaptive
systems. In this paper, we
 present a new mathematical formula to compute the probability in those
systems by using the concepts of the adaptive dynamics and quantum information theory --
quantum channels and  the lifting. The basic examples of the contextual dependent phenomena
can be found in quantum physics. And recently similar examples were found in biological and
psychological sciences. Our novel approach is motivated by traditional quantum probability, but
it is general enough to describe aforementioned phenomena outside of quantum physics.
\end{abstract}

Keywords: quantum information and probability, quantum channel, lifting, interference, two slit experiment,
cognitive science, cell's biology

\section{Introduction}

There exist several phenomena (systems) breaking the classical
probability laws such as quantum interference, e.g., \cite{PL0}, \cite{PL},
(two slit experiment), quantum-like interference in cognitive
science, the game of prisoner's dilemma (PD game), the
lactose-glucose interference in E. coli growth. The quantum-like
statistical models in psychology and cognitive science has been
discussed in \cite{KHR7}--\cite{AKO1}. The PD game was considered
by taking account of the players' minds \cite{AKO1,AKO2}. The
lactose-glucose interference is studied as the quantum-like
interference \cite{BKOY}.

These phenomena (systems) will require us a change of classical
probability law, e.g., \cite{PL}. One of our trials is  to create
a new mathematical model which  will describe in the unified
framework both ``traditional quantum phenomena'' and recently
found quantum-like phenomena outside of physics, cf. \cite{PL2}. A new general
rule of our probabilistic model is the updating the Bayesian law
\cite{ABKO}. It is important to notice that these phenomena are
contextual dependent, so that they are adaptive to the context of
the surroundings.

In such systems, the conditional probability can not be defined in
usual mathematical framework. It is well known that in quantum
systems the conditional probability does not exist (see the
section 3) in the sense of classical systems, so that the naive
total probability law should be reconsidered. Same situation is
occurred even in non-quantum systems.

Let us consider a simple and intuitive example: When one takes
sugar S and chocolate C and he is asked whether it is sweet (1) or
not so (2). Then the simple classical probability law may not be
satisfied, that is,

\begin{equation*}
P(C=1)\neq P(C=1|S=1)P(S=1)+P(C=1|S=2)P(S=2)
\end{equation*}%
\quad because the LHS $P(C=1)$ will be very close to 1 but the RHS
will be less than $\frac{1}{2}$. After taking very sweet sugar, he
will taste the chocolate is not so sweet. Taking sugar changes his
taste, i.e., the situation of the tongue changes. The conditional
probability should be defined on the basis of such a change, so
that it is observable-adaptive quantity. The $P(C=\ast |S=\ast )$
should be written as $P$adap$(C=\ast |S=\ast )$ and its proper
mathematical description (definition) should be given, that is, we
will give a mathematical formula to compute the LHS and the RHS
above.

In this paper we apply the concept of the adaptive dynamics to
make a mathematical framework for the study of these contextual
dependent systems. We present adaptive dynamics in the framework
of quantum information theory by operating with quantum channels
and liftings of input/output states.

We also remark that application of quantum information theory
outside of quantum physics, e.g., for macroscopic biological
systems, wakes up again the long debate on a possibility to
combine the realistic and quantum descriptions, cf. \cite{GAR},
\cite{GAR1}, \cite{Theo}. At the moment we are not able to present
a consistent interpretation for coming applications of quantum
information theory outside of quantum physics; we can only keep
close to the {\it operational interpretation of quantum
information theory}, e.g., \cite{DEM}, \cite{DAR}.  In applications 
of quantum probability outside of physics, the {\it Bayesian approach}
to quantum  probability and information interpretation of the quantum 
state \cite{Fuchs},  \cite{Fuchs1} are the most natural.

\section{Adaptive Dynamics}

The idea of the adaptive dynamics has implicitly appeared in
series of papers \cite{O1,O2,AO2,ISO,OV1,OV2,IOV1,KOT,AO1,O4} for
the study of compound dynamics, chaos ansd the SAT algorithm. The
name of the adaptive dynamics was deliberately used in
\cite{O4}.The AD has two aspects, one of which is the
"observable-adaptive" and another is the "state-adaptive".

\emph{The observable-adaptive dynamics is a dynamics characterized as
follows: (1) Measurement depends on how to see an observable to be measured.
(2)The interaction between two systems depends on how a fixed observable
exists, that is, the interaction is related to some aspects of obsevables to
be measured or prepared.}

\emph{The state-adaptive dynamics is a dynamics characterized as
follows: (1)Measurement depends on how the state to be used
exists, as same as the observable. (2)The correlation between two
systems interaction depends on how the state of at least one of
the systems at one instant exists e.g., the interaction
Hamiltonian depends on the state at that.}

The idea of observable-adaptivity comes from studying chaos. We claimed that
any observation will be unrelated or even contradicted to mathematical
universalities such as taking limits, sup, inf, etc. Observation of chaos is
a result due to taking suitable scales of, for example, time, distance or
domain, and it will not be possible in the limiting cases. Examples of the
observable-adaptivity are used to understand chaos \cite{O2,KOT} and examine
the violation of Bell's inequality, namely the chameleon dynamics of Accardi
\cite{AIR}. The idea of the state-adaptivity is implicitly started in
constructing a compound state for quantum communication \cite{O,O1,O3,AO2} \
Examples of the state-adaptivity are seen in an algorithm solving NP
complete problem, i.e., a pending problem for more than 30 years asking
whether there exists an algorithm solving a NP complete problem in
polynomial time, as discussed \cite{OV1,OV2,AO1}.

We will discuss in the section 5 how we can apply the adaptive dynamics to a
bio-system or a psycho-system. The concept of the adaptivity is naturally
existed in such systems. Our formulation here contains some treatments shown
in the book \cite{OV2} to understand the evolution of HIV-1, the brain
function and the irrational behavior of prisoners.

\subsection{Conditional probability and joint probability in quantum systems}

The conditional probability and the joint probability do not generally exist
in quantum system, which is an essential difference from classical system.
First of all, let us fix the notations to be used throughout in this paper.
We will review these facts for the sequel uses.

Let $\mathcal{H}$, $\mathcal{K}$ be the Hilbert spaces describing the system
of interest, $\mathcal{S(H)}$ be the set of all states or probability
measures on $\mathcal{H}$, $\mathcal{O(H)}$ be the set of all observables or
events on $\mathcal{H}$ and $\mathcal{P}(\mathcal{H})\subset \mathcal{O(H)}$
be the set of projections in $\mathcal{O(H)}$.

In classical probability, the joint probability for two events $A$ and $B$
is
\begin{equation*}
\mu (A\cap B)
\end{equation*}%
and the conditional probability is defined by
\begin{equation*}
{\frac{{\mu (A\cap B)}}{{\mu (B)}}}.
\end{equation*}%
In quantum probability, if the von Neumann-L\"uder projection rule
is correct, after a measurement of $F\in
\mathcal{P}(\mathcal{H})$, a state $\rho $ is considered to be
\begin{equation*}
\rho _{F}={\frac{{F\rho F}}{\text{\textrm{tr}}{\rho F}}}.
\end{equation*}%
When we observe an event $E\in \mathcal{P}(\mathcal{H})$, the expectation
value becomes%
\begin{equation}
\text{\textrm{tr}}\rho _{F}E={\frac{\text{\textrm{tr}}{F\rho FE}}{\text{%
\textrm{tr}}{\rho F}}}={\frac{\text{\textrm{tr}}{\rho FEF}}{\text{\textrm{tr}%
}{\rho F}}}.  \label{Cond1}
\end{equation}%
This expectation value can be a candidate of the \textit{conditional
probability in QP (quantum probability).}

There is another candidate for the conditional probability in QP, which is a
direct generalization of CP (classical probability).

This alternative expression of joint probability and the conditional
probability in QP are expressed as
\begin{equation*}
\varphi (E\wedge F)
\end{equation*}%
and
\begin{equation}
{\frac{\varphi (E\wedge F)}{\varphi (F)}},  \label{Cond2}
\end{equation}%
where $\varphi $ is a state (a measure) and $\wedge $ is the meet
of two events (projections) corresponding to $\cap $ in CP, and
for the state describing by a density operator, we have
\begin{equation*}
\varphi (\cdot )=\text{\textrm{tr}}\rho (\cdot ).
\end{equation*}

We ask when the above two expressions (\ref{Cond1}) and (\ref{Cond2})\ in QP
are equivalent.\ From the next proposition, $\varphi (\cdot \wedge
F)/\varphi (F)$ is not a probability measure (state) on $\mathcal{P}(%
\mathcal{H})$.

\begin{proposition}
(1) When $E$ commutes with $F$, the above two expressions are equivalent,
namely,
\begin{equation*}
{\frac{{\varphi (FEF)}}{{\varphi (F)}}}={\frac{{\varphi (E\wedge F)}}{{%
\varphi (F)}}}.
\end{equation*}%
(2) When $EF\neq FE$, ${\frac{{\varphi (\cdot \wedge F)}}{{\varphi (F)}}}$
is not a probability on $\mathcal{P}_{\mathcal{H}}$, so that the above two
expressions are not equivalent.
\end{proposition}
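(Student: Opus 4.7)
The plan is to treat the two parts independently; each reduces to a short structural observation, with part (2) closed by an explicit small counterexample.

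For part (1), I would use the standard identification between the lattice meet and the operator product for commuting projections. If $E$ and $F$ commute, then $EF$ is itself a projection and coincides with the lattice meet $E\wedge F$. Since $F^{2}=F$, one gets
\[
FEF \;=\; F(EF) \;=\; F(FE) \;=\; F^{2}E \;=\; FE \;=\; E\wedge F .
\]
Applying $\varphi$ and dividing by $\varphi(F)$ closes the case. This part is essentially a one-line computation once the lattice-versus-operator identification is recalled.

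For part (2), I will exhibit a complementary pair on which the purported conditional ``measure'' fails additivity, which is the minimal requirement for a probability on the orthomodular lattice $\mathcal{P}(\mathcal{H})$. A two-dimensional witness suffices: take $\mathcal{H}=\mathbb{C}^{2}$, let $E$ be the projection onto $|0\rangle$ and $F$ the projection onto $\tfrac{1}{\sqrt{2}}(|0\rangle+|1\rangle)$, so that $EF\neq FE$. The one-dimensional range of $F$ meets neither the range of $E$ nor that of $E^{\perp}=I-E$ except at $0$, so $E\wedge F = E^{\perp}\wedge F = 0$. For any state $\varphi$ with $\varphi(F)>0$ this gives
\[
\frac{\varphi(E\wedge F)}{\varphi(F)} + \frac{\varphi(E^{\perp}\wedge F)}{\varphi(F)} \;=\; 0 \;\neq\; 1 \;=\; \frac{\varphi(I\wedge F)}{\varphi(F)},
\]
violating additivity on the orthogonal resolution $I=E+E^{\perp}$.

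The main point to handle carefully, rather than a serious obstacle, is to fix what ``probability on $\mathcal{P}(\mathcal{H})$'' means in context. I take it in the weakest sense, namely finite additivity $\mu(E)+\mu(E^{\perp})=1$ on orthogonal complements; the counterexample defeats even this, so the conclusion follows \emph{a fortiori} for any stronger Gleason-type notion of $\sigma$-additivity on resolutions of the identity. If a stronger notion is intended, the same two-dimensional witness still applies and no extra work is required.
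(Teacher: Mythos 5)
Your proof is correct, and part (2) takes a genuinely different route from the paper's. For part (1) you argue exactly as the paper does: commutativity gives $E\wedge F=EF$ and $FEF=EF$, and the identity follows. For part (2) the paper does not work with your kind of witness at all: it sets $K_{\varphi}(\,\cdot\mid F)=\varphi(\,\cdot\wedge F)/\varphi(F)$ and picks three rank-one projections $P_x,P_y,P_z$ with $z$ in the linear span of $x,y$ (and distinct from both), so that $(P_x\vee P_y)\wedge P_z=P_z$ while $P_x\wedge P_z=P_y\wedge P_z=0$; the failure of additivity is then read off from the non-distributivity of the projection lattice, $K_{\varphi}(P_z\mid F)\neq 0$ versus $K_{\varphi}(P_x\wedge P_z\mid F)+K_{\varphi}(P_y\wedge P_z\mid F)=0$. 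Note that this requires $\varphi(P_z\wedge F)\neq 0$, which silently constrains the choice of $P_z$ relative to $F$ (e.g.\ $P_z\leq F$ for rank-one $F$) --- a detail the paper leaves implicit. Your version instead tests additivity on the orthogonal resolution $I=E+E^{\perp}$ with $E\wedge F=E^{\perp}\wedge F=0$, which is the minimal defining property of a probability on $\mathcal{P}(\mathcal{H})$, is fully explicit in $\mathbb{C}^{2}$, and needs no side condition beyond $\varphi(F)>0$; in that sense it is tighter and more elementary, while the paper's triple emphasizes the lattice-theoretic (non-distributive) source of the failure. Both arguments are witness-style and neither quantifies over all noncommuting pairs, so you lose nothing there. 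One small addition would round yours off: to conclude, as the statement does, that the two expressions are themselves inequivalent, either observe that $\varphi(F\,\cdot\,F)/\varphi(F)$ is always a state (hence additive) while your map is not, or compute in your own witness that $\varphi(FEF)/\varphi(F)=\tfrac12\neq 0=\varphi(E\wedge F)/\varphi(F)$.
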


\begin{proof}
(1) $EF=FE$ implies $E\wedge F=EF$ and $FEF=EFF=EF^{2}=EF$, so that
\begin{equation*}
{\frac{{\varphi \left( {E\wedge F}\right) }}{{\varphi \left( F\right) }}}={%
\frac{{\varphi \left( {FEF}\right) }}{{\varphi \left( F\right) }}}={\frac{{%
\varphi \left( {EF}\right) }}{{\varphi \left( F\right) }}}.
\end{equation*}

(2) Put $K_{\varphi }\left( {E\mid F}\right) \equiv {\frac{{\varphi \left( {%
E\wedge F}\right) }}{{\varphi \left( F\right) }}\ }$and put $z\in
linsp\left\{ x{,y}\right\} ,$ $z\neq x,y$ for any $x,y\in \mathcal{H}$. Take
the projections $P_{x}=\left\vert x\right\rangle \left\langle x\right\vert
,\;P_{y}=\left\vert y\right\rangle \left\langle y\right\vert
,\;P_{z}=\left\vert z\right\rangle \left\langle z\right\vert $ such that $%
\left( {P_{x}\vee P_{y}}\right) \wedge P_{z}=P_{z}$ and $P_{x}\wedge
P_{z}=0=P_{y}\wedge P_{z}$. Then%
\begin{equation*}
K_{\varphi }\left( \left( P_{x}\vee P_{y}\right) \wedge P_{z}\mid F\right)
=K_{\varphi }\left( P_{z}\mid F\right) \neq 0,
\end{equation*}

\begin{equation*}
K_{\varphi }{\left( {P_{x}\wedge P_{z}\mid F}\right) +}K_{\varphi }\left( {{%
P_{y}\wedge P_{z}\mid F}}\right) =0.
\end{equation*}
Therefore
\begin{equation*}
K_{\varphi }\left( {\left( {P_{x}\vee P_{y}}\right) \wedge P_{z}\mid F}%
\right) \neq K_{\varphi }\left( {P_{x}\wedge P_{z}\mid F}\right) +K_{\varphi
}\left( {P_{y}\wedge P_{z}\mid F}\right)
\end{equation*}
so that $K_{\varphi }\left( {\cdot \mid F}\right) $ is not a probability
measure on $\mathcal{P_{H}}$.
\end{proof}

In CP, the joint distribution for two random variables $f$ and $g$ is
expressed as
\begin{equation*}
\mu _{f,g}\left( {\Delta _{1},\Delta _{2}}\right) =\mu \left( {f^{-1}\left( {%
\Delta _{1}}\right) \cap g^{-1}\left( {\Delta _{2}}\right) }\right)
\end{equation*}%
for any Borel sets ${\Delta _{1},\Delta _{2}\in B\left( \mathbb{R}\right) }$%
. The corresponding quantum expression is either
\begin{equation*}
\varphi _{A,B}\left( {\Delta _{1},\Delta _{2}}\right) =\varphi \left( {%
E_{A}\left( {\Delta _{1}}\right) \wedge E_{B}\left( {\Delta _{2}}\right) }%
\right)
\end{equation*}%
or%
\begin{equation*}
\varphi ({E_{A}\left( {\Delta _{1}}\right) }\cdot {E_{B}\left( {\Delta _{2}}%
\right) })
\end{equation*}%
for two observables $A,\,B$ and their spectral measures $E_{A}(\cdot ),$ $%
E_{B}(\cdot )$ such that%
\begin{equation*}
A=\int aE_{A}\left( da\right) ,\text{ \ }B=\int bE_{B}\left( da\right) .
\end{equation*}%
It is easily checked that neither one of the above expressions satisfies
neither the condition of probability measure nor the marginal condition
unless $AB=BA$, so that they can not be the joint quantum probability in the
classical sense.

Let us explain the above situation, as an example, in a physical measurement
process. When an observable $A$ has a discrete decomposition like
\begin{equation*}
A=\sum_{k}{a_{k}F_{k}},\;{F_{i}\bot F_{j}\;\left( {i\neq j}\right) },
\end{equation*}%
the probability obtaining ${a_{k}}$ by measurement in a state $\rho $ is
\begin{equation*}
p_{k}=\text{\textrm{tr}}\rho F_{k}
\end{equation*}%
and the state $\rho $ is changed to a (conditional) state $\rho _{k}$ such
that
\begin{equation*}
\rho _{k}={\frac{{F_{k}\rho F_{k}}}{\text{\textrm{tr}}{\rho F_{k}}}}\equiv
P_{\rho }\left( {\cdot |{F_{k}}}\right) .
\end{equation*}%
After the measurement of $A$, we will measure a similar type observable $B$
(i.e., $B=\sum_{j}{b_{j}E_{j}},\;\left( {E_{i}\bot E_{j}\;}({i\neq j}\right)
$) and the probability obtaining ${b_{j}}$ after we have obtained the above $%
a_{k}$ for the measurement of $A$ is given by%
\begin{align}
{p_{jk}}& ={\left( \text{\textrm{tr}}{\rho F_{k}}\right) \left( \text{%
\textrm{tr}}{\rho _{k}E_{j}}\right) }  \notag \\
& =\text{\textrm{tr}}\rho F_{k}E_{j}F_{k}  \label{5-37} \\
& =P_{\rho }\left( E_{j}|F_{k}\right) \text{\textrm{tr}}\rho F_{k.}  \notag
\end{align}%
This $p_{jk}$ satisfies%
\begin{align}
\sum_{j,k}p_{jk}& =1,  \notag \\
\sum_{j}p_{jk}& =\text{\textrm{tr}}\rho F_{k}=p_{k},  \label{5-38}
\end{align}%
but not
\begin{equation*}
\sum_{k}{p_{jk}=\mathrm{tr}\rho E_{j}}
\end{equation*}%
unless $E_{j}F_{k}=F_{k}E_{j}\;(\forall j,k)$ so that ${p_{jk}}$ is not
considered as a joint quantum\textbf{\ }probability distribution. More
intutive expression breaking the usual classical probability law is the
following:

\begin{eqnarray*}
p_{jk} &=&P(B=b_{j}\mid A=a_{k})P(A=a_{k})\text{ }\cdots \text{and } \\
P(B &=&b_{j})\neq \sum_{k}P(B=b_{j}\mid A=a_{k})P(A=a_{k})
\end{eqnarray*}

\textit{Therefore we conclude in quantum system that the above two
candidates can not satisfy the properties of both conditional and joint
probabilities in the sense of classical system.}

The above discussion shows that the order of the measurement of two
observables $A$ and $B$ is essential and it gives us a different expectation
value, hence the state change.

\section{Liftng and joint probability}

In order to partially solve the difficulty of the nonexistence of
joint quantum distribution, the notion of compound state \cite{O1}
satisfying the marginal condition is useful. In this section we
discuss a bit general notion named \textquotedblright
lifting\textquotedblright \cite{AO1} to discuss new scheme of
probability containing both classical and quantum.

\begin{definition}
Let $\mathcal{A}_{1},\mathcal{A}_{2}$ be C*-algebras and let $\mathcal{A}%
_{1}\otimes \mathcal{A}_{2}$ be a fixed C*-tensor product of $\mathcal{A}%
_{1} $ and $\mathcal{A}_{2}$. A \textit{lifting} from $\mathcal{A}_{1}$ to $%
\mathcal{A}_{1}\otimes \mathcal{A}_{2}$ is a weak\ $\ast $-continuous map
\begin{equation*}
\mathcal{E}^{\ast }:\mathcal{S}(\mathcal{A}_{1})\rightarrow \mathcal{S}(%
\mathcal{A}_{1}\otimes \mathcal{A}_{2})
\end{equation*}%
If $\mathcal{E}^{\ast }$ is affine and its dual is a completely positive
map, we call it a linear lifting; if it maps pure states into pure states,
we call it pure.
\end{definition}

The algebras $\mathcal{A}_{1},$ $\mathcal{A}_{2}$ can be considered as two
systems of interest, for instance, $\mathcal{A}_{1}$ is an objective system
for a study and $\mathcal{A}_{2}$ is the subjective system or the surrouding
of $\mathcal{A}_{1}.$

Note that to every lifting from $\mathcal{A}_{1}$ to $\mathcal{A}_{1}\otimes
\mathcal{A}_{2}$ we can associate two channels: one from $\mathcal{A}_{1}$
to $\mathcal{A}_{1}$, defined by
\begin{equation*}
\Lambda ^{\ast }\rho _{1}(A_{1})\equiv (\mathcal{E}^{\ast }\rho
_{1})(A_{1}\otimes 1)\quad ;\quad \forall A_{1}\in \mathcal{A}_{1}
\end{equation*}%
another from $\mathcal{A}_{1}$ to $\mathcal{A}_{2}$, defined by
\begin{equation*}
\Lambda ^{\ast }\rho _{1}(A_{2})\equiv (\mathcal{E}^{\ast }\rho
_{1})(1\otimes A_{2})\quad ;\quad \forall A_{2}\in \mathcal{A}_{2}
\end{equation*}%
In general, a state $\varphi \in \mathcal{S}(\mathcal{A}_{1}\otimes \mathcal{%
A}_{2})$ such that
\begin{equation*}
\varphi \mid _{\mathcal{A}_{1}\otimes 1}=\rho _{1}\quad ;\quad \varphi \mid
_{1\otimes \mathcal{A}_{2}}=\rho _{2}
\end{equation*}%
is called a compound state of the states $\rho _{1}\in \mathfrak{S}(\mathcal{%
A}_{1})$ and $\rho _{2}\in \mathfrak{S}(\mathcal{A}_{2})$. Remark here that
the above compound state is nothing but the joint probability in CP.

The following problem is important in several applications: Given a state $%
\rho _{1}\in \mathcal{S}(\mathcal{A}_{1})$ and a channel $\Lambda ^{\ast }:%
\mathcal{S}(\mathcal{A}_{1})\rightarrow \mathcal{S}(\mathcal{A}_{2})$, find
a standard lifting $\mathcal{E}^{\ast }:\mathcal{S}(\mathcal{A}%
_{1})\rightarrow \mathcal{S}(\mathcal{A}_{1}\otimes \mathcal{A}_{2})$ such
that $\mathcal{E}^{\ast }\rho _{1}$ is a compound state of $\rho _{1}$ and $%
\Lambda ^{\ast }\rho _{1}$. Several particular solutions of this problem
have been proposed by Ohya, Ceccini and Petz, however an explicit
description of all the possible solutions to this problem is still missing,
which might be related to find a new scheme of probability theory.

\textit{However it is not sure that one can resolve the difficulty
of quantum probability if one can solve this problem. The compound
state corresponds to the joint probability in classical systems,
but there is still ambiguity to define the conditional state in
quantum systems. As pointed out in Introduction, the usual
conditional probability meets an inadequacy to interpret a certain
phenomenon, in which it is important not to manage to set the
conditional state by mimicking the classical one but to make a
mathematical rule to set new treatment of probabilistic aspects of
such a phenomenon.}

\begin{definition}
A lifting from $\mathcal{A}_{1}$ to $\mathcal{A}_{1}\otimes \mathcal{A}_{2}$
is called nondemolition for a state $\rho _{1}\in \mathcal{S}(\mathcal{A}%
_{1})$ if $\rho _{1}$ is invariant for $\Lambda ^{\ast }$ i.e., if for all $%
a_{1}\in \mathcal{A}_{1}$
\begin{equation*}
(\mathcal{E}^{\ast }\rho _{1})(a_{1}\otimes 1)=\rho _{1}(a_{1})
\end{equation*}%
The idea of this definition being that the interaction with system $2$ does
not alter the state of system $1$.
\end{definition}

\begin{definition}
A transition expectation from $\mathcal{A}_{1}\otimes \mathcal{A}_{2}$ to $%
\mathcal{A}_{1}$ is a completely positive linear map $\mathcal{E}:\mathcal{A}%
_{1}\otimes \mathcal{A}_{2}\rightarrow \mathcal{A}_{1}$ satisfying
\begin{equation*}
\mathcal{E}(1_{\mathcal{A}_{1}}\otimes 1_{\mathcal{A}_{2}})=1_{\mathcal{A}%
_{1}}.
\end{equation*}
\end{definition}

Let an initial state (resp. input signal) is changed (resp. transmitted) to
the final state resp. output state) due to a dynamics $\Lambda ^{\ast }$
(resp. channel). Here $\mathcal{A}_{1}$ (resp. $\mathcal{A}_{2}$) is
interpreted as the algebra of observables of the input (resp. output) system
and $\mathcal{E}^{\ast }$ describes the interaction between the input and
the ouput. If $\rho _{1}\in \mathcal{S}(\mathcal{A}_{1})$ is the initial
state, then the state $\rho _{2}=\Lambda ^{\ast }\rho _{1}\in \mathcal{S}(%
\mathcal{A}_{2})$ is the output state.

In several important applications, the state $\rho _{1}$ of the system
before the interaction (preparation, input signal) is not known and one
would like to know this state knowing only $\Lambda ^{\ast }\rho _{1}\in
\mathcal{S}(\mathcal{A}_{2})$, i.e., the state of the apparatus after the
interaction (output signal). From a mathematical point of view this problem
is not well posed, since the map $\Lambda ^{\ast }$ is usually not
invertible. The best one can do in such cases is to acquire a control on the
description of those input states which have the same image under $\Lambda
^{\ast }$ and then choose among them according to some statistical criterion.

Let us show some important examples of liftings and channels below

\begin{example}
\textbf{: Isometric lifting.}

Let $V:\mathcal{H}_{1}\rightarrow \mathcal{H}_{1}\otimes \mathcal{H}_{2}$ be
an isometry
\begin{equation*}
V^{\ast }V=1_{\mathcal{H}_{1}}.
\end{equation*}%
Then the map
\begin{equation*}
\mathcal{E}:x\in \mathbf{B}(\mathcal{H}_{1})\otimes \mathbf{B}(\mathcal{H}%
_{2})\rightarrow V^{\ast }xV\in \mathbf{B}(\mathcal{H}_{1})
\end{equation*}%
is a transition expectation in the sense of Accardi, and the associated
lifting maps a density matrix $w_{1}$ in $\mathcal{H}_{1}$ into
\begin{equation*}
\mathcal{E}^{\ast }w_{1}=Vw_{1}V^{\ast }
\end{equation*}%
in $\mathcal{H}_{1}\otimes \mathcal{H}_{2}.$ Liftings of this type are
called isometric. Every isometric lifting is a pure lifting. In this case
the channel $\Lambda ^{\ast }:\mathcal{H}_{1}\rightarrow \mathcal{H}_{1}$ is
given by tr$_{H_{2}}\mathcal{E}^{\ast }.$
\end{example}

It is the particular isometric lifting characterized by the properties.
\begin{equation*}
\mathcal{H}_{1}=\mathcal{H}_{2}=:\Gamma (\mathbb{C})\text{ }(\text{Fock
space over }\mathbb{C}\mathbf{)=}L^{2}\left( \mathbb{R}\right)
\end{equation*}%
\begin{equation*}
V:\Gamma (\mathbb{C})\rightarrow \Gamma (\mathbb{C})\otimes \Gamma (\mathbb{C%
})
\end{equation*}%
is characterized by the expression
\begin{equation*}
V\left\vert \theta \right\rangle =\left\vert \alpha \theta \right\rangle
\otimes \left\vert \beta \theta \right\rangle
\end{equation*}%
where $\left\vert \theta \right\rangle $ is the normalized coherent vector
parametrized by $\theta \in \mathbb{C}$ and $\alpha ,\beta \in \mathbb{C}$
are such that
\begin{equation*}
|\alpha |^{2}+|\beta |^{2}=1
\end{equation*}%
Notice that this liftings maps coherent states into products of coherent
states. So it maps the simplex of the so called classical states (i.e., the
convex combinations of coherent vectors) into itself. Restricted to these
states it is of convex product type explained below, but it is not of convex
product type on the set of all states.Denoting, for $\theta \in \mathbb{C}%
,\;\omega _{\theta }$ the coherent state on $\mathbf{B}(\Gamma (\mathbb{C}%
)), $ namely,
\begin{equation*}
\omega _{\theta }(b)=\left\langle \theta ,b\theta \right\rangle \ ;\ b\in
\mathbf{B}(\Gamma (\mathbb{C}))
\end{equation*}%
then for any $b\in \mathbf{B}(\Gamma (\mathbb{C}))$
\begin{equation*}
(\mathcal{E}^{\ast }\omega _{\theta })(b\otimes 1)=\omega _{\alpha \theta
}(b),
\end{equation*}%
so that this lifting is not nondemolition. These equations mean that, by the
effect of the interaction, a coherent signal (beam) $\left\vert \theta
\right\rangle $ splits into 2 signals (beams) still coherent, but of lower
intensity, but the total intensity (energy) is preserved by the
transformation.

Finally we mention two important beam splitting which are used to
discuss quantum gates and quantum teleportation.

(1) Superposed beam splitting:
\begin{equation*}
V_{s}\left\vert \theta \right\rangle \equiv {\frac{1}{\sqrt{2}}}(\left\vert
\alpha \theta \right\rangle \otimes \left\vert \beta \theta \right\rangle
-i\left\vert \beta \theta \right\rangle \otimes \left\vert \alpha \theta
\right\rangle )
\end{equation*}

(2) Beam splitting with two inputs and two output: Let $\left\vert \theta
\right\rangle $ and $\left\vert \gamma \right\rangle $ be two input coherent
vectors. Then

\begin{equation*}
V_{d}\left( \left\vert \theta \right\rangle \otimes \left\vert \gamma
\right\rangle \right) \equiv \left\vert \alpha \theta +\beta \gamma
\right\rangle \otimes \left\vert -\bar{\beta}\theta +\bar{\alpha}\gamma
\right\rangle ,
\end{equation*}%
where $\left\vert \alpha \right\vert ^{2}+\left\vert \beta \right\vert
^{2}=1 $. These extend linearly to isometry, and their isometric liftings
are neither of convex product type nor nondemolition type.

\begin{example}
\textit{Quantum measurement:}\emph{\ \ }If a measuring apparatus is prepared
by an positive operator valued measure $\left\{ Q_{n}\right\} $ then the
state $\rho $ changes to a state $\Lambda ^{\ast }\rho $ after this
measurement, $\rho \rightarrow \Lambda ^{\ast }\rho =\sum_{n}Q_{n}\rho
Q_{n}. $
\end{example}

\begin{example}
\textit{Reduction (Open system dynamics):}\emph{\ }If a system $\Sigma _{1}$
interacts with an external system $\Sigma _{2}$ described by another Hilbert
space $\mathcal{K}$ and the initial states of $\Sigma _{1}$ and $\Sigma _{2}$
are $\rho _{1}$ and $\rho _{2}$, respectively, then the combined state $%
\theta _{t}$ of $\Sigma _{1}$ and $\Sigma _{2}$ at time $t$ after the
interaction between two systems is given by%
\begin{equation*}
\theta _{t}\equiv U_{t}(\rho _{1}\otimes \rho _{2})U_{t}^{\ast },
\end{equation*}%
where $U_{t}=\exp (-itH)$ with the total Hamiltonian $H$ of $\Sigma _{1}$
and $\Sigma _{2}$. A channel is obtained by taking the partial trace w.r.t. $%
\mathcal{K}$ such as%
\begin{equation*}
\rho _{1}\rightarrow \Lambda ^{\ast }\rho _{1}\equiv \mathrm{tr}_{\mathcal{K}%
}\theta _{t}.
\end{equation*}

\begin{example}
\textbf{: }%
\index{lifting!compound}\textbf{The compound lifting.}
\end{example}

Let $\Lambda ^{\ast }:\mathcal{S}(\mathcal{A}_{1})\rightarrow \mathcal{S}(%
\mathcal{A}_{2})$ be a channel. For any $\rho _{1}\in \mathcal{S}(\mathcal{A}%
_{1})$ in the closed convex hull of the extremal states, fix a decomposition
of $\rho _{1}$ as a convex combination of extremal states in $\mathcal{S}(%
\mathcal{A}_{1})$
\begin{equation*}
\rho _{1}=\int_{\mathcal{S}(\mathcal{A}_{1})}\omega _{1}d\mu
\end{equation*}%
where $\mu $ is a Borel measure on $\mathcal{S}(\mathcal{A}_{1})$ with
support in the extremal states, and define
\begin{equation*}
\mathcal{E}^{\ast }\rho _{1}\equiv \int_{\mathcal{S}(\mathcal{A}_{1})}\omega
_{1}\otimes \Lambda ^{\ast }\omega _{1}d\mu
\end{equation*}%
Then $\mathcal{E}^{\ast }:\mathcal{S}(\mathcal{A}_{1})\rightarrow \mathcal{S}%
(\mathcal{A}_{1}\otimes \mathcal{A}_{2})$ is a lifting, nonlinear even if $%
\Lambda ^{\ast }$ is linear, and it is a nondemolition type.
\end{example}

The most general lifting, mapping $\mathcal{S}(\mathcal{A}_{1})$
into the closed convex hull of the extremal product states on
$\mathcal{A}_{1}\otimes \mathcal{A}_{2}$ is essentially of this
type. This nonlinear nondemolition lifting was first discussed by
Ohya to define the compound state and the mutual entropy for
quantum information communication \cite{O1,O3}. The above is a bit
more general because we shall weaken the condition that $\mu $ is
concentrated on the extremal states used in \cite{O1}.

Therefore once a channel is given, by which a lifting of convex product type
can be constructed. For example, the von Neumann quantum measurement process
is written, in the terminology of lifting, as follows: Having measured a
compact observable $A=\sum_{n}a_{n}P_{n}$ (spectral decomposition with $%
\sum_{n}P_{n}=I$) in a state $\rho $, the state after this measurement will
be
\begin{equation*}
\Lambda ^{\ast }\rho =\sum_{n}P_{n}\rho P_{n}
\end{equation*}%
and a lifting $\mathcal{E}^{\ast }$, of convex product type, associated to
this channel $\Lambda ^{\ast }$ and to a fixed decomposition of $\rho $ as $%
\rho $ $=\sum_{n}\mu _{n}\rho _{n}$ ($\rho _{n}\in \mathcal{S}(\mathcal{A}%
_{1})$) is given by :
\begin{equation*}
\mathcal{E}^{\ast }\rho =\sum_{n}\mu _{n}\rho _{n}\otimes \Lambda ^{\ast
}\rho _{n}.
\end{equation*}

\begin{example}
Amplifier channel: To recover the loss in the course of a quantum
communication, we need to amplify the signal (photon). In quantum optics, a
linear amplifier is usually expressed by means of annihilation operators $a$
and $b$ on $\mathcal{H}$ and $\mathcal{K}$, respectively :
\begin{equation*}
c=%
\sqrt{G}a\otimes I+\sqrt{G-1}I\otimes b^{\ast }
\end{equation*}%
where $G(\geq 1)$ is a constant and $c$ satisfies CCR (i.e.,
$[c,\;c^{\ast }]=I$) on $\mathcal{H}\otimes \mathcal{K}$. This
expression however is not convenient to compute several
information quantities, like entropy. The lifting expression of
the amplifier is good for such uses and it is given as follows:
Let $c=\mu a\otimes I+\nu I\otimes b^{\ast }$ with $\left\vert \mu
\right\vert ^{2}-\left\vert \nu \right\vert ^{2}=1$ and
$\left\vert \gamma \right\rangle $ be the eigenvector of $c$ :
$c\left\vert \gamma \right\rangle =\gamma \left\vert \gamma
\right\rangle $. For two coherent vectors $\left\vert \theta
\right\rangle $ on $\mathcal{H}$ and $\left\vert \theta ^{\prime
}\right\rangle $ on $\mathcal{K}$ , $\left\vert \gamma
\right\rangle $ can be written by the squeezing expression :
$\left\vert \gamma \right\rangle =\left\vert {\theta \otimes
\theta ^{\prime }\;;\;\mu ,\nu }\right\rangle $ and the lifting is
defined by an isometry
\begin{equation*}
V_{\theta ^{\prime }}\left\vert \theta \right\rangle =\left\vert {\theta
\otimes \theta ^{\prime }\;;\;\mu ,\nu }\right\rangle
\end{equation*}%
such that
\begin{equation*}
\mathcal{E}^{\ast }\rho =V_{\theta ^{\prime }}\rho V_{\theta ^{\prime
}}^{\ast }\quad \rho \in \mathcal{S}\left( \mathcal{H}\right) .
\end{equation*}%
The channel of the amplifier is
\begin{equation*}
\Lambda ^{\ast }\rho =\text{\textrm{tr}}_{\mathcal{K}}\mathcal{E}^{\ast
}\rho .
\end{equation*}
\end{example}

Finally we note that a channel is determined by a lifting and coversely a
lifting is constructed by a channel.

\section{New views of probability both in classical and quantum systems}

In this section we will discuss how to use the concept of lifting to explain
phenomena breaking the usual probability law.

Let $\mathcal{A},\mathcal{B}$ be C*-algebras describing the systems for a
study, more specifically, let $\mathcal{A},\mathcal{B}$ be the sets of all
observales in Hilbert spaces $\mathcal{H}$, $\mathcal{K}$; $\mathcal{A}=%
\mathcal{O(H)}$, $\mathcal{B}=\mathcal{O(K)}$. Let $\mathcal{E}^{\ast }$ be
a lifting from $\mathcal{S}\left( \mathcal{H}\right) $ to $\mathcal{S}\left(
\mathcal{H}\otimes \mathcal{K}\right) ,$so that its dual map $\mathcal{E}$
is a mapping from $\mathcal{A}\otimes \mathcal{B}$ to $\mathcal{A}$. There
are several liftings for various different cases to be considered: (1) If $%
\mathcal{K}$ is $\mathbb{C}$, then the lifting $\mathcal{E}^{\ast }$ is
nothing but a channel from $\mathcal{S}\left( \mathcal{H}\right) $ to $%
\mathcal{S}\left( \mathcal{H}\right) .$ (2) If $\mathcal{H}$ is $\mathbb{C}$%
, then the lifting $\mathcal{E}^{\ast }$ is a channel from $\mathcal{S}%
\left( \mathcal{H}\right) $ to $\mathcal{S}\left( \mathcal{K}\right) .$
Further $\mathcal{K}$ or $\mathcal{H}$ can be decomposed as $\mathcal{%
K=\otimes }_{i}\mathcal{K}_{i}$ (resp. $\oplus _{i}\mathcal{K}_{i}),$ and so
for $\mathcal{H}$, so that $\mathcal{B}$ can be $\otimes _{i}\mathcal{B}_{i}$
(resp. $\oplus _{i}\mathcal{B}_{i}$) and so for $\mathcal{A}$.

\textit{The adaptive dynamics is considered that the dynamics of a
state or an observable after an instant (say the time
}$t_{0})$\textit{\ attached to a system of interest is affected by
the existence of some other observable
and state at that instant.} Let $\rho $ $\in \mathcal{S}\left( \mathcal{H}%
\right) $ and $A\in \mathcal{A}$ be a state and an observable before $t_{0}$%
, and let $\sigma \in \mathcal{S}\left( \mathcal{H}\otimes \mathcal{K}%
\right) $ and $Q\in \mathcal{A}\otimes \mathcal{B}$ be a state and an
observable to give an effect to the state $\rho $ and the observable $A.$In
many cases, the effect to the state is dual to that to the observable, so
that we will discuss the effect to the state only. This effect is described
by a lifting $\mathcal{E}_{\sigma Q}^{\ast },$so that the state $\rho $
becomes $\mathcal{E}_{\sigma Q}^{\ast }\rho $ first, then it will be tr$_{%
\mathcal{K}}\mathcal{E}_{\sigma Q}^{\ast }\rho \equiv \rho _{\sigma Q}$. The
adaptive dynamics is the whole process such as

\begin{equation*}
Adaptive\text{ }Dynamics:\text{ }\rho \Rightarrow \mathcal{E}_{\sigma
Q}^{\ast }\rho \Rightarrow \rho _{\sigma Q}=tr_{\mathcal{K}}\mathcal{E}%
_{\sigma Q}^{\ast }\rho
\end{equation*}%
That is, what we need is how to construct the lifting for each problem to be
studied, that is, we properly construct the lifting $\mathcal{E}_{\sigma
Q}^{\ast }$ by choosing $\sigma $ and $Q$ properly.

The state change discussed in Section 2 is a naive example of the adaptive
dynamics, in which the lifting is given as follows: $Q=A=\sum_{k}{a_{k}F_{k}}%
\in \mathcal{A}$, ${F_{i}\bot F_{j}\;\left( {i\neq j}\right) }$ and $%
\mathcal{E}_{\sigma Q}^{\ast }\equiv \left\{ \mathcal{E}_{F_{k}A}^{\ast
}\right\} $ (this case the state $\sigma $ is not needed) such that for any $%
B$ $\in \mathcal{A}$
\begin{equation*}
P(B\mid A=a_{k})=trB\mathcal{E}_{F_{k}A}^{\ast }\rho =trBF_{k}\rho
F_{k}\diagup trF_{k}\rho .
\end{equation*}%
In this case the lifting is a channel from $\mathcal{S}\left( \mathcal{H}%
\right) $ to $\mathcal{S}\left( \mathcal{H}\right) ,$ the case (1)
above. It is true that we do not know whether this "projection
rule" can describe almost all probabilistic phenomena in nature or
not.

Let us go back to the discussion using lifting $\mathcal{E}_{\sigma Q}^{\ast
}$ above. The expectation value of another observable $B\in \mathcal{A}$ or $%
\mathcal{A}\otimes \mathcal{B}$ in the adaptive state $\rho _{\sigma Q}$ is
\begin{equation*}
tr\rho _{\sigma Q}B=tr_{\mathcal{H}}tr_{\mathcal{K}}B\mathcal{E}_{\sigma
Q}^{\ast }\rho .
\end{equation*}

Now suppose that there are two quantum event systems $A=\left\{ a_{k}\in
\mathbb{R},F_{k}\in \mathcal{A}\right\} $ and $B=\left\{ b_{j}\in \mathbb{R}%
,E_{j}\in \mathcal{A}\right\} ,$ where we do not assume $F_{k}$, $E_{j}$ are
projections, but they satisfy the conditions $\sum_{k}F_{k}=I,$ $%
\sum_{j}E_{j}=I$ as POVM (positive operator valued measure)
corresponding to the partition of a probability space in classical
system. Then the "joint-like" probability obtaining $a_{k}$ and
$b_{j}$ might be given by the \textit{formula}

\begin{equation}
P(a_{k},b_{j})=trE_{j}\boxdot F_{k}\mathcal{E}_{\sigma Q}^{\ast }\rho ,
\label{Cond.3}
\end{equation}%
where $\boxdot $ is a certain operation (relation) between $A$ and $B$, more
generally one can take a certain operator function $f(E_{j},F_{k})$ instead
of $E_{j}\boxdot F_{k}.$ If $\sigma ,Q$ are independent from any $F_{k}$, $%
E_{j}$ and the operation $\boxdot $ is the usual tensor product $\otimes $
so that $A$ and $B$ can be considered in two independent systems or to be
commutative, then the above "joint-like" probability becomes the joint
probability. However if not such a case, e.g., $Q$ is related to $A$ and $B,$
the situation will be more subtle. Therefore the problem is how to set the
operation $\boxdot $ and how to construct the lifting $\mathcal{E}_{\sigma
Q}^{\ast }$ in order to describe the particular problems associated to
systems of interest. We in the sequel discuss this problem in the contextual
dependent systems like bio-systems and psyco-systems mentioned in
Introduction. That is, we discuss how to apply the formula \ref{Cond.3} to
the following three problems breaking the usual probability law: (1) State
change of tongue for sweetness, (2) Lactose-glucose interference in E. coli
growth, (3) Updating the Bayesian law.

\section{State change of tongue for sweetness}

The first problem is not so sophisticated but very simple and common one. As
considered in Introduction, when one takes sugar S and chocolate C and he is
asked whether it is sweet (1) or not so (2). Then the simple classical
probability law may not be satisfied, that is,
\begin{equation*}
P(C=1)\neq P(C=1|S=1)P(S=1)+P(C=1|S=2)P(S=2)
\end{equation*}%
\quad because the LHS $P(C=1)$ will be very close to 1 but the RHS will be
less than $\frac{1}{2}$.

Let $e_{1}$ and $e_{2}$ be the orthogonal vectors describing sweet
and non-sweet states, respectively. \textit{The initial state of
tongue is neutral} such as
\begin{equation*}
\rho \equiv \left\vert x_{0}\right\rangle \left\langle x_{0}\right\vert ,
\end{equation*}%
where $x_{0}=\frac{1}{\sqrt{2}}\left( e_{1}+e_{2}\right) .$ Here we start
from the neutral pure state $\rho $ because we consider two sweet things$.$
It is enough for us to take the Hilbert space $\mathbb{C}^{2}$ for this
problem, so that $e_{1}$ and $e_{2}$ can be set as $\binom{1}{0}$ and $%
\binom{0}{1},$respectively.

When one takes "sugar", the operator corresponding to taking "sugar" will be
given as
\begin{equation*}
S=\left(
\begin{array}{cc}
\lambda _{1} & 0 \\
0 & \lambda _{2}%
\end{array}%
\right) ,
\end{equation*}%
where $\left\vert \lambda _{1}\right\vert ^{2}+$ $\left\vert \lambda
_{2}\right\vert ^{2}=1.$This operator can be regarded as the square root of
the sugar state $\sigma _{S};$
\begin{equation*}
\sigma _{S}=\left\vert \lambda _{1}\right\vert ^{2}E_{1}+\left\vert \lambda
_{2}\right\vert ^{2}E_{2},\text{ }E_{1}=\binom{1}{0}(10),E_{2}=\binom{0}{1}%
(01).
\end{equation*}%
Taking sugar, he will taste that it is sweet with the probability $%
\left\vert \lambda _{1}\right\vert ^{2}$ and non-sweet with the probability $%
\left\vert \lambda _{2}\right\vert ^{2},$ so $\left\vert \lambda
_{1}\right\vert ^{2}$ should be much higher than $\left\vert \lambda
_{2}\right\vert ^{2}$ for a usual sugar. This comes from the following
change of the neutral initial tongue (i.e., non-adaptive) state:
\begin{equation*}
\rho \rightarrow \rho _{S}=\Lambda _{S}^{\ast }(\rho )\equiv \frac{S^{\ast
}\rho S}{tr\left\vert S\right\vert ^{2}\rho },
\end{equation*}%
which is the state when he takes the sugar. This is similar to the
usual expression of state change in quantum dynamics, although it
is adaptive for sugar. The subtle point of the present problem is
that the state of tongue is neither $\rho _{S}$ nor $\rho $ at the
instant just after taking sugar. Note here that if we kill the
subjectivity (personal character?) of one's tongue, then the state
$\rho _{S}$ can be understood as
\begin{equation*}
E_{1}\rho _{S}E_{1}+E_{2}\rho _{S}E_{2},
\end{equation*}%
which is the unread objective state as usual in quantum measurement. We can
use the above two expressions $\rho _{S},$which give us the same result for
the computation of the probability.

For some time duration, the tongue becomes dull to sweetness, so
the tongue state can be written by means of a certain "exchanging"
operator $X=\left(
\begin{array}{cc}
0 & 1 \\
1 & 0%
\end{array}%
\right) $ such that%
\begin{equation*}
\rho _{S}^{a}=X\rho _{S}X,
\end{equation*}%
where $"a"$ means the adaptive change. Then similarly as sugar, when one
takes a chocolate, the state will be $\rho _{S\rightarrow C}^{a}$ given by
\begin{equation*}
\rho _{S\rightarrow C}^{a}=\Lambda _{C}^{\ast }(\rho _{S}^{a})\equiv \frac{%
C^{\ast }\rho _{S}^{a}C}{tr\left\vert C\right\vert ^{2}\rho _{S}^{a}},
\end{equation*}%
where $C$ will be given as
\begin{equation*}
C=\left(
\begin{array}{cc}
\mu _{1} & 0 \\
0 & \mu _{2}%
\end{array}%
\right)
\end{equation*}%
with $\left\vert \mu _{1}\right\vert ^{2}+$ $\left\vert \mu _{2}\right\vert
^{2}=1.$Common experience tells us that $\left\vert \lambda _{1}\right\vert
^{2}\geq $ $\left\vert \mu _{1}\right\vert ^{2}\geq \left\vert \mu
_{2}\right\vert ^{2}\geq \left\vert \lambda _{2}\right\vert ^{2}$ and the
first two are much larger than the last two.

As shown above, the adaptive set $\left\{ \sigma ,Q\right\} $ is the set $%
\left\{ S\text{ }\left( =\sigma _{S}\right) ,X,C\right\} $, we introduce the
following \textit{nonlinear demolition} lifting:
\begin{equation*}
\mathcal{E}_{\sigma Q}^{\ast }(\rho )(=\mathcal{E}_{S\text{ }\left( =\sigma
_{S}\right) XC}^{\ast }(\rho ))\equiv \rho _{S}\otimes \rho _{S\rightarrow
C}^{a}=\Lambda _{S}^{\ast }(\rho )\otimes \Lambda _{C}^{\ast }(X\Lambda
_{S}^{\ast }(\rho )X),
\end{equation*}%
which implies the joint probabilities $P(S=j,C=k)$ $(j,k=1,2)$ as
\begin{equation*}
P(S=j,C=k)=trE_{j}\otimes E_{k}\mathcal{E}_{\sigma Q}^{\ast }(\rho ).
\end{equation*}%
The probability that one tastes sweetness of the chocolate after tasting
sugar is
\begin{equation*}
P(C=1,S=1)+P(C=1,S=2)=\frac{\left\vert \lambda _{2}\right\vert
^{2}\left\vert \mu _{1}\right\vert ^{2}}{\left\vert \lambda _{2}\right\vert
^{2}\left\vert \mu _{1}\right\vert ^{2}+\left\vert \lambda _{1}\right\vert
^{2}\left\vert \mu _{2}\right\vert ^{2}}.
\end{equation*}%
Note that this probability is much less than%
\begin{equation*}
P(C=1)=trE_{1}\Lambda _{C}^{\ast }(\rho )=\left\vert \mu _{1}\right\vert
^{2},
\end{equation*}%
which is the probability of sweetness tasted by the neutral tongue $\rho $.
In this sense, the usual probability law%
\begin{equation*}
P(C=1)=P(S=1,C=1)+P(S=2,C=1)
\end{equation*}%
is not satisfied.

\section{Activity of lactose operon in E. coli}

The lactose operon is a group of genes in E. coli (Escherichia coli), and it
is required for the metabolism of lactose. This operon produces $\beta $%
-galactosidase, which is an enzyme to digest lactose into glucose and
galactose. There was an experiment measuring the activity of $\beta $%
-galactosidase which E. coli produces in the presence of (I) only 0.4\%
lactose, (II) only 0.4\% glucose, or (III) mixture 0.4\% lactose + 0.1\%
glucose, see \cite{IKA}. The activity is represented in Miller's units
(enzyme activity measurement condition), and it reaches to $3000$ units by
full induction. In the cases of (I) and (II), the data of $2920$ units and $%
33$ units were obtained. These results make one to expect that the
activity in the case (III) will be large, because the number of
molecules of lactose is larger than that of glucose. However, the
obtained data were only $43$ units. This result implies that E.
coli metabolizes glucose in preference to lactose. In biology,
this functionality of E. coli have been discussed, and it was
known that glucose has a property reducing lactose permease
provided by the operon. Apart from such qualitative and
biochemical explanation, it will be also necessary to discuss a
mathematical interpretation such that
the biological activity in E. coli is evaluated quantitatively. In the paper~%
\cite{BKOY}, it is pointed out that the activity of E. coli
violates the total probability law as shown below, which might
come from the preference in E. coli's metabolism. We will explain
this contextual behaivor by the formula \ref{Cond.3}.

We consider two events $L$ and $E$; $L$: E. coli detects a lactose molecule
in his metabolism and $G$: E. coli detects a glucose molecule. In the case
of (I) or (II), the probability $P(L)=1$ or $P(G)=1$ is given. In the case
of (III), $P(L)$ and $P(G)$ are calculated as
\begin{equation*}
P(L)=\frac{0.4}{0.4+0.1}=0.8,\ P(G)=\frac{0.1}{0.4+0.1}=0.2.
\end{equation*}%
The events $L$ and $G$ are exclusive each other, so it is assumed that $%
P(L\cup G)=P(L)+P(G)=1$. Further, we consider the events $\{+,-\}$ which
means that E. coli activates his lactose operon or not. From the
experimental data of the cases (I) and (II), the following conditional
probabilities are obtained:
\begin{eqnarray}
&&\mbox{(I): }P(+|L)=\frac{2920}{3000},  \notag \\
&&\mbox{(II): }P(+|G)=\frac{33}{3000}.  \label{CASEI}
\end{eqnarray}
In the case (III), if the total probability law is satisfied, the
probability $P(+\cap (L\cup G))$ is computed as
\begin{eqnarray*}
P(+\cap (L\cup G)) &=&P(+|L\cup G)P(L\cup G)=P(+|L\cup G) \\
&=&P(+\cap L)+P(+\cap G)=P(+|L)P(L)+P(+|G)P(G)
\end{eqnarray*}%
However, from the experimental data, one can estimate
\begin{equation*}
P(+|L\cup G)=\frac{43}{3000},
\end{equation*}%
so that the total probability law is violated:
\begin{equation*}
P(+|L\cup G)\neq P(+|L)P(L)+P(+|G)P(G).
\end{equation*}

\textit{We will explain how to compute the probabilities above by using the
concept of lifting}. Firstly, we introduce the initial state $\rho
=\left\vert x_{0}\right\rangle \left\langle x_{0}\right\vert $ on Hilbert
space $\mathcal{H}=\mathbb{C}^{2}$. The state vector $x_{0}$ is written by
\begin{equation*}
x_{0}=\frac{1}{\sqrt{2}}e_{1}+\frac{1}{\sqrt{2}}e_{2}.
\end{equation*}%
The basis $\{e_{1},e_{2}\}$ denote the detection of lactose or
glucose by E.
coli, that is, the events, $L$ and $G$. The E. coli at the initial state $%
\rho $ has not recognized the existence of lactose and glucose
yet. When the E. coli recognizes them, the following state change
occurs;
\begin{equation*}
\rho \mapsto \rho _{D}=\Lambda _{D}^{\ast }(\rho )\equiv \frac{D\rho D^{\ast
}}{\mathrm{tr}(|D|^{2}\rho )},
\end{equation*}%
where
\begin{equation*}
D=\left(
\begin{array}{cc}
\alpha & 0 \\
0 & \beta%
\end{array}%
\right)
\end{equation*}%
with $|\alpha |^{2}+|\beta |^{2}=1$. Note that $|\alpha |^{2}$ and
$|\beta |^{2}$ implies the probabilities for the events $L$ and
$G$, that is, $P(L)$ and $P(G)$. The state $\sigma _{D}\equiv
DD^{\ast }$ means the distribution of $P(L)$ and $P(G)$. In this
sense, the state $\sigma _{D}$ represents the solution of lactose
and glucose. We call $D$ the \textit{detection operator} and call
$\rho _{D}$ the \textit{detection state}. The state determining
the
activation of the operon in E.coli depends on the detection state $\rho _{D}$%
. We give such state by
\begin{equation*}
\rho _{op}=\Lambda _{Q}^{\ast }(\rho _{D})\equiv \frac{Q\rho _{D}Q^{\ast }}{%
\mathrm{tr}(Q\rho Q^{\ast })},
\end{equation*}%
where the operator $Q$ is written as
\begin{equation*}
Q=\left(
\begin{array}{cc}
a & b \\
c & d%
\end{array}%
\right) .
\end{equation*}%
We call $\rho _{op}$ the \textit{activation state} for the operon and call $%
Q $ the \textit{activation operator}. (The components $a,b,c$ and $d$ are
discussed later.)

We introduce the lifting
\begin{equation*}
\mathcal{E}_{D,Q}^{\ast }(\rho )=\Lambda _{Q}^{\ast }(\Lambda _{D}^{\ast
}(\rho ))\otimes \Lambda _{D}^{\ast }(\rho )\in \mathcal{K}\otimes \mathcal{H%
}=\mathbb{C}^{2}\otimes \mathbb{C}^{2},
\end{equation*}%
by which we can describe the correlation between the activity of lactose
operon and the ratio of concentration of lactose and glucose. For example,
the joint probabilities $P(+\cap L)$ and $P(-\cap L)$ are given by
\begin{eqnarray}
&&P(+\cap L)=\mathrm{tr}(E_{1}\otimes E_{1}\mathcal{E}_{D,Q}^{\ast }(\rho )),
\notag \\
&&P(-\cap L)=\mathrm{tr}(E_{2}\otimes E_{1}\mathcal{E}_{D,Q}^{\ast }(\rho )).
\label{JoP}
\end{eqnarray}%
Here, let us consider the case of $P(L)=|\alpha |^{2}=1$, then the
probabilities $P(\pm |L\cup G)=P(\pm \cap (L\cup G))=P(\pm \cap L)+P(\pm
\cap G)$ correspond to $P(\pm |L)$ of Eq.~(\ref{CASEI})-(I), and the Eq.~(%
\ref{JoP}) gives the conditional probabilities as
\begin{equation*}
P(+|L)=\frac{|a|^{2}}{|a|^{2}+|c|^{2}},\ P(-|L)=\frac{|c|^{2}}{%
|a|^{2}+|c|^{2}}.
\end{equation*}%
From these results, we may give the following forms for $a$ and $b$.
\begin{equation*}
a=\sqrt{P(+|L)}\mathrm{e}^{\mathrm{i}\theta _{+L}}k_{L},\ c=\sqrt{P(-|L)}%
\mathrm{e}^{\mathrm{i}\theta _{-L}}k_{L}
\end{equation*}%
Here, $k_{L}$ is a certain real number. In a similar way, we obtain
\begin{equation*}
b=\sqrt{P(+|G)}\mathrm{e}^{\mathrm{i}\theta _{+G}}k_{G},\ d=\sqrt{P(-|G)}%
\mathrm{e}^{\mathrm{i}\theta _{-G}}k_{G}
\end{equation*}%
for the components $b$ and $d$. To simplify the discussion, hereafter, we
assume $\theta _{+L}=\theta _{-L}$, $\theta _{+G}=\theta _{-G}$ and denote $%
\mathrm{e}^{\mathrm{i}\theta _{L}}k_{L}$, $\mathrm{e}^{\mathrm{i}\theta
_{G}}k_{G}$ by $\tilde{k}_{L}$, $\tilde{k}_{G}$. Then, the operator $Q$ is
decomposed to
\begin{equation}
Q=\left(
\begin{array}{cc}
\sqrt{P(+|L)} & \sqrt{P(+|G)} \\
\sqrt{P(-|L)} & \sqrt{P(-|G)}%
\end{array}%
\right) \left(
\begin{array}{cc}
\tilde{k}_{L} & 0 \\
0 & \tilde{k}_{G}%
\end{array}%
\right) .  \label{Qop}
\end{equation}%
The probability $P(+|L\cap G)$ with this $Q$ is calculated as
\begin{equation*}
P(+|L\cap G)=\frac{|\sqrt{P(+|L)}\tilde{k}_{L}\alpha +\sqrt{P(+|G)}\tilde{k}%
_{G}\beta |^{2}}{|\sqrt{P(+|L)}\tilde{k}_{L}\alpha +\sqrt{P(+|G)}\tilde{k}%
_{G}\beta |^{2}+|\sqrt{P(-|L)}\tilde{k}_{L}\alpha +\sqrt{P(-|G)}\tilde{k}%
_{G}\beta |^{2}}.
\end{equation*}%
This gives us how to compute the probability. The rate $|\tilde{k}_{L}|/|%
\tilde{k}_{G}|$ essentially determines the degree of the violation of the
total probability law. Recall the experimental data in the case (III). In
this case, $P(L)=|\alpha |^{2}=0.8<P(G)=|\beta |^{2}=0.2$, but $P(P(+|L\cap
G))$ is very small. According to our interpretation, it implies that the
rate $|\tilde{k}_{L}|/|\tilde{k}_{G}|$ is very small. In this sense, the
operator $F$ =$\left(
\begin{array}{cc}
\tilde{k}_{L} & 0 \\
0 & \tilde{k}_{G}%
\end{array}%
\right) $ in Eq.~(\ref{Qop}) specifies the preference in E. coli's
metabolism. We call $F$ the \textit{preference operator}. Finally note that
if $\alpha ,\beta $ are real and $\tilde{k}_{L}=\tilde{k}_{G}^{\ast }$, the
usual total probability law is held.

\section{ Bayesian Updating Biased By Psychological Factor}

The Bayesian updating is an important concept in Bayesian statics, and it is
used to describe a process of inference, which is explained as follows:
Consider two event systems denoted by $S_{1}=\{A,B\}$ and $S_{2}=\{C,D\}$,
where the events $A$ and $B$ are mutually exclusive, and the same holds for $%
C$ and $D$. Firstly, a decision-making entity, say Alice, estimates the
probabilities $P(A)$ and $P(B)$ for the events $A$ and $B$, which are called
the \textit{prior probabilities}. The prior probability is sometime called
\textquotedblleft \textit{subjective} probability" or \textquotedblleft
\textit{personal} probability". Further, Alice knows the conditional
probabilities $P(C|A)$ and $P(C|B)$ which are obtained from some statistical
data. When Alice sees the occurrence of the event $C$ or $D$ in the system $%
S_{2}$, she can change her prior prediction $P(A)$ and $P(B)$ to the
following conditional probabilities by \textit{Bayes' rule}: When Alice sees
the occurrence of $C$ in $S_{2}$, she can update her prediction for the
event $A$ from $P(A)$ to
\begin{equation*}
P(A|C)=\frac{P(C|A)P(A)}{P(C|A)P(A)+P(C|B)P(B)}.
\end{equation*}%
When Alice sees the occurrence of $D$ in $S_{2}$, she can update her
prediction to
\begin{equation*}
P(A|D)=\frac{P(D|A)P(A)}{P(D|A)P(A)+P(D|B)P(B)}.
\end{equation*}%
These conditional probabilities are called the \textit{posterior
probabilities}. The change of prediction is described as an
\textquotedblleft updating" from the prior probabilities $P(A)$ to the
posterior probability, and it is called the Bayesian updating.

In the paper~\cite{AOTKB}, we redescribed the process of Bayesian updating
in the framework of \textquotedblleft quantum-like representation", where we
introduced the following state vector defined on Hilbert space $\mathcal{H}=%
\mathcal{H}_{1}\otimes \mathcal{H}_{2}=\mathbb{C}^{2}\otimes \mathbb{C}^{2}$%
;
\begin{eqnarray}
\left\vert \Phi \right\rangle &=&\sqrt{P(A^{\prime })}\left\vert A^{\prime
}\right\rangle \otimes (\sqrt{P(C^{\prime }|A^{\prime })}\left\vert
C^{\prime }\right\rangle +\sqrt{P(D^{\prime }|A^{\prime })}\left\vert
D^{\prime }\right\rangle )  \notag \\
&&+\sqrt{P(B^{\prime })}\left\vert B^{\prime }\right\rangle \otimes (\sqrt{%
P(C^{\prime }|B^{\prime })}\left\vert C^{\prime }\right\rangle +\sqrt{%
P(D^{\prime }|B^{\prime })}\left\vert D^{\prime }\right\rangle ).
\label{Pstate2}
\end{eqnarray}%
We call this vector the \textit{prediction state vector}. The set of vectors
$\{\left\vert A^{\prime }\right\rangle ,\left\vert B^{\prime }\right\rangle
\}$ becomes an orthogonal basis on $\mathcal{H}_{1}$, and $\{\left\vert
C^{\prime }\right\rangle ,\left\vert D^{\prime }\right\rangle \}$ is another
orthogonal basis on $\mathcal{H}_{2}$. The $A^{\prime }$, $B^{\prime }$, $%
C^{\prime }$ and $D^{\prime }$ represent the events defined as

\begin{description}
\item[Event $A^{\prime }$:] Alice \textbf{judges} ``\textit{the event $A$
occurs in the system $S_1$}."

\item[Event $B^{\prime }$:] Alice \textbf{judges} ``\textit{the event $B$
occurs in the system $S_1$}."

\item[Event $C^{\prime }$:] Alice \textbf{judges} ``\textit{the event $C$
occurs in the system $S_2$}."

\item[Event $D^{\prime }$:] Alice \textbf{judges} \textquotedblleft \textit{%
the event $D$ occurs in the system $S_{2}$}."
\end{description}

These events are the \textit{subjective} events (judgments) in \textit{%
Alice's mentality} and the vectors $\left\vert A^{\prime }\right\rangle $, $%
\left\vert B^{\prime }\right\rangle $, $\left\vert C^{\prime }\right\rangle $
and $\left\vert D^{\prime }\right\rangle $ mean the \textit{mind}s toward
the above judgements. The vector $\left\vert \Phi \right\rangle $ represents
that these minds coexists in Alice's mentality. For example, Alice is
conscious of the mind $\left\vert A^{\prime }\right\rangle $ with the weight
$\sqrt{P(A^{\prime })}$, and under the condition of the event $A^{\prime }$,
she gives the weights $\sqrt{P(C^{\prime }|A^{\prime })}$ and $\sqrt{%
P(D^{\prime }|A^{\prime })}$ for the minds $\left\vert C^{\prime
}\right\rangle $ and $\left\vert D^{\prime }\right\rangle $. Such an
assignment of weights implies that Alice feels \textit{causality} between $%
S_{1}$ and $S_{2}$: The events in $S_{1}$ are \textit{causes} and the events
in $S_{2}$ are \textit{results}. The square of $\sqrt{P(A^{\prime })}$ is
equivalent to a prior probability $P(A)$ in the Bayesian theory. If Alice
knows the objective conditional probabilities $P(C|A)$ and $P(C|B)$, Alice
can give the weights of $\sqrt{P(C^{\prime }|A^{\prime })}$ and $\sqrt{%
P(C^{\prime }|B^{\prime })}$ as $P(C^{\prime }|A^{\prime })=P(C|A)$ and $%
P(C^{\prime }|B^{\prime })=P(C|B)$.

The process of the above Bayesian updating is represented in the term of the
lifting discussed in the previous section.\textit{\ }When Alice has \textit{%
the prediction state} $\left\vert \Phi \right\rangle \left\langle \Phi
\right\vert \equiv \rho $ and sees the occurrence of the event $C$ in $S_{2}$%
, her mind for the event $D^{\prime }$ is \textit{vanished} instantaneously.
This vanishing is represented as the reduction by the projection operator $%
M_{C^{\prime }}=I\otimes \left\vert C^{\prime }\right\rangle \left\langle
C^{\prime }\right\vert $;
\begin{equation*}
\frac{M_{C^{\prime }}\rho M_{C^{\prime }}}{\mathrm{tr}(M_{C^{\prime }}\rho )}%
\equiv \rho _{C^{\prime }}
\end{equation*}%
The posterior probability $P(A|C)$ is calculated by
\begin{equation*}
\mathrm{tr}(M_{A^{\prime }}\rho _{C^{\prime }}),
\end{equation*}%
where $M_{A^{\prime }}=\left\vert A^{\prime }\right\rangle \left\langle
A^{\prime }\right\vert \otimes I$.

The inference based on the Bayesian updating is rational from the
view point of probability theory. However, we empirically know
that our behavior is sometimes irrational in various situations
from the mathematical view point and we have psychological factors
that frequently disturb rational inferences. \textit{Such
irrational inference can be mathematically discussed by using the
concept of lifting as the examples above.} Let us
introduce the lifting for this problem from $S(\mathcal{H})$ to $S(\mathcal{H%
}\otimes \mathcal{K})$ by
\begin{equation*}
\mathcal{E}_{\sigma ,V}^{\ast }(\rho )=V\rho \otimes \sigma V^{\ast }.
\end{equation*}%
Here $\sigma \in S(\mathcal{K})$ is a state specifying a psychological
factor which Alice holds in her mentality. The operator $V$ on $\mathcal{H}%
\otimes \mathcal{K}$ is isometry and gives a correlation between the
prediction state $\rho $ and the psychological factor $\sigma $, in other
words, it specifies a psychological affection to Alice's rational inference.
We call the state defined by
\begin{equation*}
\tilde{\rho}\equiv \mathrm{tr}_{\mathcal{K}}(\mathcal{E}_{\sigma ,V}^{\ast
}(\rho )),
\end{equation*}%
\textit{the prediction state biased from the rational prediction }$\rho $.
The posterior probability is generally given by
\begin{equation*}
\tilde{P}(A,C)\equiv \mathrm{tr}\left( M_{A^{\prime }}M_{C^{\prime }}\tilde{%
\rho}M_{C^{\prime }}\right) ,
\end{equation*}%
so that
\begin{equation*}
\tilde{P}(A|C)=\mathrm{tr}\left( M_{A^{\prime }}\frac{M_{C^{\prime }}\tilde{%
\rho}M_{C^{\prime }}}{\mathrm{tr}(M_{C^{\prime }}\tilde{\rho})}\right) ,
\end{equation*}%
which is not equal to the original $P(A|C)$ in the rational inference.

Here, it should be noted that the prior probability from $\tilde{\rho}$,
which is given by $\tilde{P}(A)=\mathrm{tr}_{\mathcal{H}}(M_{A^{\prime }}%
\tilde{\rho})$, is not equal to the original $P(A)=\mathrm{tr}_{\mathcal{H}%
}(M_{A^{\prime }}{\rho })$ in general. However, the prior probability is
given and fixed before the updating, and if the psychological factor affects
only the estimation of posterior probabilities, then the prediction state $%
\tilde{\rho}$ can be set to satisfy the condition
\begin{equation*}
\tilde{P}(A)=P(A).
\end{equation*}%
In the paper~\cite{PRE}, we gave an example of such a state $\tilde{\rho}$
which is provided by the psychological factor called the \textit{reliability
of information}. This example has the following equalities
\begin{eqnarray*}
&&{P}(A)=\tilde{P}(A,C)+\tilde{P}(A,D)=\tilde{P}(A), \\
&&{P}(B)=\tilde{P}(B,C)+\tilde{P}(B,D)=\tilde{P}(B),
\end{eqnarray*}%
for the joint probabilities given by
\begin{eqnarray*}
\tilde{P}(A,C) &=&\mathrm{tr}_{\mathcal{H}}\left( M_{A^{\prime
}}M_{C^{\prime }}\tilde{\rho}\right) ,\ \tilde{P}(B,C)=\mathrm{tr}_{\mathcal{%
H}}\left( M_{B^{\prime }}M_{C^{\prime }}\tilde{\rho}\right) , \\
\tilde{P}(A,D) &=&\mathrm{tr}_{\mathcal{H}}\left( M_{A^{\prime
}}M_{D^{\prime }}\tilde{\rho}\right) ,\ \tilde{P}(B,D)=\mathrm{tr}_{\mathcal{%
H}}\left( M_{B^{\prime }}M_{C^{\prime }}\tilde{\rho}\right) .
\end{eqnarray*}%
On the other hand, for $P(C)=\mathrm{tr}_{\mathcal{H}}(M_{C^{\prime }}\rho )$
and $P(D)=\mathrm{tr}_{\mathcal{H}}(M_{D^{\prime }}\rho )$,
\begin{eqnarray}
&&{P}(C)\neq \tilde{P}(A,C)+\tilde{P}(B,C)=\tilde{P}(C|A)P(A)+\tilde{P}%
(C|B)P(B),  \notag \\
&&{P}(D)\neq \tilde{P}(A,D)+\tilde{P}(B,D)=\tilde{P}(D|A)P(A)+\tilde{P}%
(D|B)P(B),  \label{viop}
\end{eqnarray}%
that is, $P(C)\neq \tilde{P}(C)$ and $P(D)\neq \tilde{P}(D)$. Note that $%
P(C) $ means the probability of the event $C$ which will be estimated by
Alice with $\rho $. Alice with the biased $\tilde{\rho}$ will estimate $%
\tilde{P}(C)$, and then, the violation of total probability law of Eq.~(\ref%
{viop}) is not occurred actually. However, Eq.~(\ref{viop}) represents the
violation of Alice's rationality, and the difference between $P(C)$ and $%
\tilde{P}(C)$ specifies the degree of Alice's irrationality.

\end{document}